\documentclass[11pt]{article}
\usepackage[left=1in,top=1in,right=1in,bottom=1in,head=.1in,nofoot]{geometry}

\setlength{\footskip}{24pt} 
\usepackage{setspace,url,bm,amsmath} 

\usepackage{titlesec} 
\titlelabel{\thetitle.\quad} 
\titleformat*{\section}{\bf\large\center\uppercase} 

\usepackage{graphicx} 
\usepackage{bbm}
\usepackage{latexsym}
\usepackage{caption}
\usepackage[margin=20pt]{subcaption}
\usepackage{hyperref}

\usepackage[table]{xcolor}

\newcommand{\GG}[1]{}

\usepackage{amsthm}
\usepackage{comment}
\theoremstyle{definition}
\newtheorem{assumption}{Assumption}
\newtheorem{theorem}{Theorem}

\newtheorem{corollary}{Corollary}

\usepackage{natbib} 
\bibpunct{(}{)}{;}{a}{}{,} 

\usepackage{etoolbox} 
\apptocmd{\sloppy}{\hbadness 10000\relax}{}{} 

\usepackage{color}
\usepackage{listings}

\begin{document}
\doublespacing
\title{\bf Construction of alternative hypotheses for randomization tests with ordinal outcomes}
\author{Jiannan Lu, Peng Ding and Tirthankar Dasgupta\footnote{Address for correspondence: Tirthankar Dasgupta, Department of Statistics, Harvard University, 1 Oxford Street, Cambridge, 02138 Massachusetts, U.S.A.
Email: \texttt{dasgupta@stat.harvard.edu}}\\Harvard University}
\date{}
\maketitle

\begin{abstract}
For ordinal outcomes, we construct sequences of alternative hypotheses in increasing departures from the sharp null hypothesis of zero treatment effect on each experimental unit, to help assess the powers of randomization tests in randomized treatment-control experiments.
\end{abstract}

\noindent \textbf{Keywords:} Completely randomized experiment; Potential outcome; Power; Sharp null hypothesis

\section{Introduction}

Introduced by \cite{Fisher:1935}, randomization tests are useful tools for causal inference, because they assess the statistical significance of treatment effects without making any assumptions about the underlying distribution of the outcome. Early theories on randomization tests were developed by \citet{Pitman:1938} and \citet{Kempthorne:1952}, which showed that many statistical procedures can be viewed as approximations of randomization tests. To quote \citet{Bradley:1968}, ``[a] corresponding parametric test is valid only to the extent that it results in the same statistical decision [as the randomization test].'' A crucial advantage of randomization tests is their abilities to handle non-standard (e.g., ordinal) outcomes. However, there appears to be limited research on how to assess the powers of randomization tests for ordinal outcomes.

Several researchers like \citet{Miller:2006}, have pointed out that in many randomized experiments, experimental units cannot be viewed as a random sample drawn from a hypothetical population. Therefore it is important to restrict the scope of inference to the finite population of experimental units. The potential outcomes framework \citep{Neyman:1923, Rubin:1974} makes randomization tests easy to interpret, and more importantly, helps us recognize its role in making finite-population inference. However, it does not naturally permit the assessment of the powers of randomization tests, which requires constructing alternatives to the sharp null hypothesis of zero treatment effect on each experimental unit. The existing literature \citep[e.g.,][]{Lehmann:1975, Rosenbaum:2010} assesses the powers of randomization tests by invoking infinite population models, primarily to circumvent the difficulty associated with construction of finite population alternatives. Such a construction requires specifying the potential outcomes for all experimental units, and is thus considered ``a thankless task'' by experts \citep{Rosenbaum:2010}. Such a ``thankless task'' can be made easy by invoking the assumption of independence of potential outcomes, as in some existing literature \citep[e.g.,][]{Cheng:2009, Agresti:2010}. However, the impact of association between potential outcomes on the powers of randomization tests in a finite population setting has never been investigated before.

In this paper, we demonstrate that it is indeed possible to construct alternative populations of ordinal potential outcomes without invoking the independence assumption. We propose a procedure to construct alternative hypotheses for ordinal outcomes, which is particularly useful in the finite population setting, but is also applicable to infinite populations. Moreover, unlike the existing literature \citep[e.g.,][]{Cheng:2009, Agresti:2010} which assume independent potential outcomes, our construction procedure takes into account the dependence structure of the potential outcomes and demonstrates (through simulation studies) that the association indeed does affect the powers of randomization tests in a finite population setting.

The paper proceeds as follows. Section \ref{sec:rev} reviews randomization tests of the sharp null hypothesis for ordinal outcomes. Section \ref{sec:mj} introduces two measures quantifying departures from the sharp null hypothesis, discusses their relationships to the powers of randomization tests, and proposes a procedure to construct alternative hypotheses in closed forms. Section \ref{sec:simu} reports the results of a simulation study that demonstrates how to use the proposed construction procedure to assess the powers of randomization tests. Section \ref{sec:dis} presents some concluding remarks.

\section{Randomization Tests for Ordinal Outcomes}\label{sec:rev}

\subsection{Potential Outcomes, Sharp Null Hypothesis and Randomization Test} \label{ss:pot}
We consider a completely randomized experiment with $N$ units, a binary treatment and an ordinal outcome with $J$ categories labeled as $0,\ldots,J-1$, where 0 and $J-1$ are the ``worst'' and ``best'' categories, respectively. Under the Stable Unit Treatment Value Assumption \citep{Rubin:1980} that there is only one version of the treatment and no interference among units, we define the pair $\left\{Y_i(1), Y_i(0)\right\}$ as the potential outcomes of the $i$th unit under treatment and control. Let
$$
p_{kl} = \mathrm{pr} \left\{ Y_i(1) = k, Y_i(0) = l \right\}
\quad
(k, l = 0, 1, \ldots, J-1)
$$
be the probability of potential outcomes $k$ and $l,$ under treatment and control. Here, the proportion or probability notation pr($\cdot$) can be defined for a finite population with $N$ units, or for an infinite population. The $J \times J$ probability matrix $\bm P = \left( p_{kl} \right)_{0 \le k,l \le J-1}$ summarizes the joint distribution of the potential outcomes, and plays a crucial role in our later discussion. Let
\begin{equation*}
p_{k+} = \sum_{l^\prime=0}^{J-1}p_{kl^\prime},
\quad
p_{+l} = \sum_{k^\prime =0}^{J-1}p_{k^\prime l}
\quad
(k, l = 0, 1, \ldots, J-1).
\end{equation*}
The vectors $\bm{p}_1 = \left( p_{0+}, \ldots, p_{J-1, +} \right)^\mathrm{T}$ and $\bm{p}_0 = \left( p_{+0}, \ldots, p_{+, J-1} \right)^\mathrm{T}$ characterize the marginal distributions of the potential outcomes under treatment and control.

Using the potential outcomes, we express the sharp null hypothesis as
$
Y_i(0)=Y_i(1)
$
for all $i.$ Under the sharp null hypothesis, the probability matrix ${\bm P}$ is diagonal with $p_{j+} = p_{jj} = p_{+j},$ for all $j=0, 1, \ldots, J-1.$ To test the sharp null hypothesis, we use data from completely randomized experiments with $N_1$ units assigned to treatment. For the $i$th unit, we denote its treatment indicator as $W_i,$ and its observed outcome is consequently
$
Y_i^{\textrm{obs}}=W_iY_i(1) + (1-W_i)Y_i(0).
$
For each $j,$ let $n_{0j}$ and $n_{1j}$ respectively represent the numbers of units exposed to control and treatment with observed outcome $j.$ Given the observed data, we first choose a suitable test statistic, typically a ``measure of extremeness" (\citealp{Brillinger:1978}), and then obtain a $p$-value by comparing the observed value of the test statistic to its randomization distribution.

\section{Construction of Alternative Hypotheses}\label{sec:mj}

To evaluate the powers of randomization tests, we need to construct alternatives to the sharp null hypothesis. We can violate the sharp null hypothesis in two distinct ways:
\begin{enumerate}
\item different marginal probabilities, i.e., $\bm p_1 \ne \bm p_0;$
\item identical marginal probabilities, but nonzero off diagonal elements in $\bm P$.
\end{enumerate}
For example, consider the following probability matrices
\begin{equation}\label{eq:examples}
\bm{P}_1  =
\left( \begin{array}{ccc}
\frac{1}{6} & \frac{1}{6} & \frac{1}{6} \\
0 & \frac{1}{6} & \frac{1}{6} \\
\frac{1}{6} & 0 & 0 \end{array}
\right),
\quad
\bm{P}_2  =
\left( \begin{array}{ccc}
0 & \frac{1}{6} & \frac{1}{6} \\
\frac{1}{6} & 0 & \frac{1}{6} \\
\frac{1}{6} & \frac{1}{6} & 0 \end{array}
\right),
\quad
\bm{P}_3  =
\left( \begin{array}{ccc}
\frac{1}{9} & \frac{1}{9} & \frac{1}{9} \\
\frac{1}{9} & \frac{1}{9} & \frac{1}{9} \\
\frac{1}{9} & \frac{1}{9} & \frac{1}{9} \end{array}
\right),
\end{equation}
all of which violate the sharp null hypothesis. In particular, $\bm P_1$ has different marginal probabilities; $\bm P_2$ and $\bm P_3$ have identical marginal probabilities but nonzero off diagonal elements.

Inspired by this example, to construct alternative hypotheses we introduce two measures quantifying violations of the sharp null hypothesis. We use the Hellinger distance $\tau_{HD}$ \citep{Hellinger:1909} to quantify the difference between the marginal probabilities:
\begin{equation*}
\tau_{HD} \left(\bm p_1, \bm p_0\right) =
\left\{\frac{1}{2}\sum\limits_{j=0}^{J-1}\left(p_{j+}^{1/2}-p_{+j}^{1/2}\right)^2\right\}^{1/2}.
\end{equation*}
Other choices include the Kullback--Leibler divergence and total variance distance. Under the sharp null hypothesis $\tau_{HD} = 0,$ and therefore nonzero $\tau_{HD}$ implies violations of the sharp null hypothesis. However, $\tau_{HD}$ relies solely on the marginal probabilities, ignoring the joint distribution of the potential outcomes. For example, the probability matrices $\bm P_2$ and $\bm P_3$ in \eqref{eq:examples} violate the sharp null hypothesis although $\tau_{HD} = 0.$ To address this issue we use Cohen's Kappa \citep{Cohen:1960}:
\begin{equation}
\kappa \left( \bm P \right) = \left\{\mathrm{tr}(\bm{P})-\bm{p}_1^\mathrm{T} \bm{p}_0\right\}
\left./\right.
\left(1-\bm{p}_1^\mathrm{T} \bm{p}_0\right), \label{eq:kappa_def}
\end{equation}
where $\mathrm{tr}(\cdot)$ is the trace function. Cohen's $\kappa$ relies on the probability matrix $\bm P,$ and under the sharp null hypothesis $\kappa=1$ because $\bm P$ is diagonal.

We now construct sequences of alternative hypotheses by varying the Hellinger distance $\tau_{HD}$ and Cohen's $\kappa$. To be more specific, we follow a two-step procedure:
\begin{enumerate}
\item construct a sequence of marginal probabilities, in an increasing order of $\tau_{HD};$
\item for fixed marginal probabilities, construct a sequence of probability matrices in an increasing order of $\kappa$, which involves the following sub-steps:
\begin{enumerate}
\item minimize and maximize $\kappa$ subject to the following constraints:
\begin{equation*}
\sum_{k^\prime =0}^{J-1}p_{k^\prime l} = p_{+l}, \quad
\sum_{l^\prime=0}^{J-1}p_{kl^\prime} = p_{k+}, \quad
p_{kl} \ge 0
\quad
(k, l = 0, 1, \ldots, J-1);
\end{equation*}
\item use a convex combination of the minimizer and maximizer to construct probability matrices with intermediate values of $\kappa$.
\end{enumerate}
\end{enumerate}
Step 1 helps access the impact of $\tau_{HD}$ on the powers of randomization tests, and Step 2 further helps access the impact of $\kappa.$ For fixed marginal probabilities, Sub-step (a) studies the two extreme cases of ``most'' and ``least'' violations of the sharp null hypothesis, and Sub-step (b) addresses the ``in between'' cases. Therefore, this procedure provides a relatively complete picture of violations of the sharp null hypothesis.

For given marginal probabilities $\bm p_1$ and $\bm p_0,$ the minimization problem in the above procedure is somewhat intuitive. Consider the probability matrix $\bm P_I$ with independent potential outcomes, i.e.,
$
p_{kl}=p_{k+} p_{+l}
$
for all
$
k,l.
$
If we are not interested in distributions with negatively associated potential outcomes, $\bm P_I$ minimizes $\kappa$ as zero. The maximization problem is, however, non-trivial, and almost intractable unless some restrictions are imposed on ${\bm P}$.  For the purpose of simplification, we make the following assumption:

\begin{assumption} \label{assume:sd}
(Stochastic Dominance) For all $j=1, \ldots, J-1$, $\sum_{k=j}^{J-1}p_{k+}\geq \sum_{l=j}^{J-1} p_{+l}$.
\end{assumption}

As an illustration, consider the three probability matrices in \eqref{eq:examples}. Among these matrices, ${\bm P}_1$ does not satisfy stochastic dominance. Besides the advantage of reducing the number of possible alternative hypotheses, in applied research the stochastic dominance pattern occurs frequently \citep[e.g.,][]{Bradley:1962, Bajorski:1999}, and is termed ``positive distributional causal effect'' in \cite{Ju:2010}. Because of the aforementioned technical convenience and the practical importance, we first focus on marginal probabilities $\bm p_1$ and $\bm p_0$ that satisfy Stochastic Dominance, and then discuss general marginal probabilities.

We further simplify the maximization problem by restricting the maximizer to be lower triangular by utilizing a well-known result that for any marginal probabilities satisfying  Stochastic Dominance, there exists a probability matrix that is lower triangular. This existence result is a special case of Strassen's theorem \citep{Strassen:1965, Lindvall:1992}, which was utilized by \cite{Rosenbaum:2001}. We are now in a position to state and prove the following theorem that provides the maximum value of $\kappa$, and the maximizer itself:

\begin{theorem} \label{thm:m}
For any $J\ge 2$, given marginal probabilities $\bm{p}_1$ and $\bm{p}_0$ satisfying Stochastic Dominance, there exists a lower triangular probability matrix $\bm P_+$ achieving the upper bound of $\kappa:$
\begin{equation}\label{eq:kmax}
\kappa \left( \bm P \right) \le \kappa \left( \bm P_+ \right) = \left\{\sum_{k=0}^{J-1}\min \left( p_{k+}, p_{+k} \right) - \bm{p}_1^\mathrm{T}\bm{p}_0\right\}/\left(1-\bm{p}_1^\mathrm{T}
\bm{p}_0\right).
\end{equation}
\end{theorem}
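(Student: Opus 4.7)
My plan begins with the observation that since $1-\bm p_1^{\mathrm T}\bm p_0$ depends only on the given marginals, maximizing $\kappa(\bm P)$ is equivalent to maximizing $\mathrm{tr}(\bm P)=\sum_{k}p_{kk}$ over all joint distributions $\bm P$ with marginals $\bm p_1,\bm p_0$. The upper bound then falls out for free: for any such $\bm P$, the diagonal entry $p_{kk}$ is one summand of the $k$th row (sum $p_{k+}$) and one summand of the $k$th column (sum $p_{+k}$), so $p_{kk}\le\min(p_{k+},p_{+k})$ and summing over $k$ yields the bound in \eqref{eq:kmax}. This part does not use Stochastic Dominance.

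For achievability, I would aim to construct a lower triangular $\bm P_+$ with $p_{kk}^{+}=\min(p_{k+},p_{+k})$ for every $k$. After fixing the diagonal, the remaining task is to place the residual row masses $r_k=p_{k+}-p_{kk}^{+}=(p_{k+}-p_{+k})^{+}$ and residual column masses $c_k=p_{+k}-p_{kk}^{+}=(p_{+k}-p_{k+})^{+}$ in strictly lower triangular positions, i.e., to find $q_{kl}\ge 0$ for $k>l$ with $\sum_{l<k}q_{kl}=r_k$ and $\sum_{k>l}q_{kl}=c_l$. Note that $r_kc_k=0$ for each $k$ and $\sum_k r_k=\sum_k c_k$, so total supply matches total demand.

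The hard step is the feasibility of this restricted transport problem, and this is exactly where Stochastic Dominance enters. By the standard Hall / max-flow condition applied to the bipartite graph of allowed pairs $(k,l)$ with $l<k$, feasibility is equivalent to
\begin{equation*}
\sum_{l\ge m}c_l\le\sum_{k>m}r_k\quad(m=0,1,\ldots,J-1).
\end{equation*}
Using $r_k-c_k=p_{k+}-p_{+k}$, this inequality rearranges to $\sum_{k\ge m}p_{k+}\ge\sum_{k\ge m}p_{+k}$, which is precisely Assumption~\ref{assume:sd}. Once feasibility is established, an explicit $\bm P_+$ can be produced either by a flow-decomposition argument or, more concretely, by a greedy row-by-row filling: set $p_{00}^{+}=p_{0+}$ (valid because $p_{0+}\le p_{+0}$), then for $k=1,\ldots,J-1$ fill $p_{k,0}^{+},p_{k,1}^{+},\ldots,p_{k,k-1}^{+}$ to saturate the remaining column demands from left to right and take $p_{kk}^{+}$ to be whatever residual row mass remains, verifying inductively that the result coincides with $\min(p_{k+},p_{+k})$.

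The main obstacle is therefore not the upper bound but the achievability half, and within achievability the delicate point is verifying that Stochastic Dominance is not merely sufficient for the existence of \emph{some} lower triangular coupling (Strassen) but sufficient for the existence of one whose diagonal is maximal entry-by-entry. I expect the cleanest writeup to state the bound, set the diagonal, and then dispatch feasibility by a one-line Hall-type argument that reduces directly to Assumption~\ref{assume:sd}.
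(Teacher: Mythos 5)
Your proposal is correct. The first half (reducing the maximization of $\kappa$ to the maximization of $\mathrm{tr}(\bm P)$ and bounding $p_{kk}\le\min(p_{k+},p_{+k})$) is identical to the paper's argument, and indeed uses no Stochastic Dominance. For achievability, however, you take a genuinely different route. The paper builds $\bm P_+$ explicitly, column by column from the right: it sets each diagonal entry to $\min(p_{k+},p_{+k})$ and spreads the remaining column balance over the lower rows in proportion to their unfilled row masses (the recursion \eqref{eq:imp}), then verifies the row-sum constraint by a telescoping computation in which Stochastic Dominance enters only to guarantee $\tilde p_{00}=p_{0+}$ and a positive denominator. You instead fix the diagonal, pass to the residual transportation problem on the strictly lower triangle with supplies $r_k=(p_{k+}-p_{+k})^{+}$ and demands $c_l=(p_{+l}-p_{l+})^{+}$, and invoke a Gale--Hall feasibility criterion; because the support is a staircase, the binding cuts are the suffix sets $\{m,\ldots,J-1\}$, and the resulting family of inequalities is equivalent to Assumption~\ref{assume:sd} (strictly speaking, the feasibility condition at level $m$ is the dominance inequality at level $m$ when $c_m>0$ and at level $m+1$ when $c_m=0$, so the equivalence holds for the two families jointly rather than term by term, as your ``rearranges to'' slightly glosses). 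Your approach buys conceptual clarity: it isolates exactly why Stochastic Dominance is the right hypothesis --- it is necessary and sufficient for a lower-triangular coupling whose diagonal is entrywise maximal, which, as you rightly note, is strictly more than Strassen's existence of \emph{some} lower-triangular coupling --- and it delegates the combinatorics to a standard theorem. The paper's approach buys an explicit closed-form recursion for every entry of $\bm P_+$ (which the authors then reuse for $J=2,3$ in Corollaries~\ref{coro:m2} and \ref{coro:m3}) at the cost of a somewhat opaque verification. One small caution on your greedy fallback: it must saturate the \emph{residual} demands $c_l$, not the raw column sums, from left to right (otherwise row $1$ can be drained into column $0$ and the diagonal target $\min(p_{1+},p_{+1})$ missed); with that reading, the northwest-corner argument goes through precisely under the same suffix inequalities.
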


\begin{proof}[Proof of Theorem \ref{thm:m}]
The proof consists of two parts. We first show that \eqref{eq:kmax} is an upper bound of $\kappa,$ and then construct a probability matrix attaining it.

For all $k = 0, \ldots, J-1$, the diagonal element $p_{kk}$ of matrix ${\bm P}$ cannot be greater than either $p_{k+}$ or $p_{+k}$, i.e., $p_{kk} \le \min \left( p_{k+}, p_{+k} \right),$ which implies
\begin{equation}
\mathrm{tr} \left( \bm P \right) \le \sum_{k=0}^{J-1}\min \left( p_{k+}, p_{+k} \right). \label{eq:thm1_tr}
\end{equation}
Substituting \eqref{eq:thm1_tr} in \eqref{eq:kappa_def} yields the upper bound of $\kappa$ in \eqref{eq:kmax}.

We then sequentially construct a $J \times J$ lower triangular matrix with fixed row rums $\bm p_1$ and column sums $\bm p_0.$ We start with the last column and proceed backwards. At any point in the construction, we denote the element in the matrix already filled by $\tilde{p}_{kl}$ and those have not by $p_{kl}.$ First, for the last column with index $J-1$, only the last entry needs to be filled, and we set it equal to the corresponding column sum, i.e., $\tilde{p}_{J-1, J-1} = p_{+, J-1}$. Next, for all $r=1, \ldots, J-1$, given all elements in the last $r$ columns are already filled, we consider the problem of filling in the elements of column with index $l = J-r-1$, as shown in Table \ref{t:proof1}. At this point, the already filled elements in the matrix are $\tilde{p}_{kl}$, where $k = 0, \ldots, J-1$ and $l = J-r, \ldots, J-1$.

\begin{table}[h]
\centering\small
\caption{Filling in the column with index $l = J-r-1$ when the last $r$ columns are already filled} \label{t:proof1}
\begin{equation*}
\begin{array}{c|cc||c||ccc|c}
 \mbox{Row index} & \multicolumn{6}{|c|}{\mbox{Column index} \ (l)} & \mbox{Row Sum} \\
(k) & 0  & \cdots & J-r-1 & J-r & \cdots & J-1 & p_{k+} \\ \hline
0 & p_{00} & \cdots & 0   & 0 & \cdots &  0 & p_{0+} \\
\vdots & \vdots & \vdots & \vdots & \vdots & \vdots &  \vdots & \vdots \\
J-r-1 & p_{J-r-1, 0} & \cdots & \min(p_{J-r-1, +}, p_{+, J-r-1}) & 0 & \cdots & 0  & p_{J-r-1, +} \\
J-r &  p_{J-r, 0} & \cdots & p_{J-r,J-r-1} = ?                                        & \tilde{p}_{J-r,J-r} & \cdots & 0 & p_{J-r, +} \\
\vdots & \vdots & \vdots & \vdots & \vdots & \vdots &  \vdots & \vdots \\
J-1 &  p_{J-1, 0} & \cdots & p_{J-1,J-r-1}=?    & \tilde{p}_{J-1,J-r} & \cdots & \tilde{p}_{J-1,J-1}=p_{+,J-1} & p_{J-1, +} \\ \hline
\mbox{Column sum} & p_{+0} &  \cdots & p_{+, J-r-1} & p_{+, J-r} & \cdots & p_{+, J-1} & 1 \\
\end{array}
\end{equation*}
\end{table}

To fill the column with index $l = J-r-1$, note that all entries for $k<J-r-1$ will be equal to zero. We set the diagonal element with row index $k = l = J-r-1$ to be the minimum of the corresponding row and column sums, i.e., $\tilde{p}_{J-r-1, J-r-1} = \min \left( p_{J-r-1, +}, p_{+, J-r-1} \right)$. Now the difference $p_{+, J-r-1} -  \min (p_{J-r-1, +}, p_{+, J-r-1})$ needs to be distributed over the remaining entries of the column. Note that this difference is zero if $\min \left( p_{J-r-1, +}, p_{+, J-r-1} \right) = p_{+, J-r-1}$. Therefore, for all $k=J-r, \ldots, J-1$, we make the entry $p_{k, J-r-1}$ proportional to the ``remaining balance'' $p_{+, J-r-1} -  \min (p_{J-r-1, +}, p_{+, J-r-1})$, where we choose the proportionality constant as
\begin{equation}
\frac{\sum_{l=0}^{J-r-1} p_{kl}}{\sum_{k^\prime \ge J-r} \sum_{l=0}^{J-r-1} p_{k^\prime l}}, \label{eq:prop}
\end{equation}
that is, the ratio of the sum of empty entries in the row with label $k$ and the sum of empty entries in all rows below the one labeled $J-r-1$. Both the numerator and denominator of \eqref{eq:prop} can be expressed in terms of the given marginals and the already filled-in entries in the last $r$ columns:
\begin{equation}\label{eq:propfill}
\sum_{l=0}^{J-r-1} p_{kl} = p_{k+} - \sum_{l \ge J-r} \tilde{p}_{kl},
\quad
\sum_{k^\prime \ge J-r} \sum_{l=0}^{J-r-1} p_{k^\prime l} = \sum_{k^\prime \ge J-r} \left( p_{k^\prime +} - \sum_{l \ge J-r} \tilde{p}_{k^\prime l} \right)
\end{equation}
and hence can be computed uniquely. The construction method, \eqref{eq:prop} or \eqref{eq:propfill}, eventually leads to the following iterative imputation equation for all $k=J-r, \ldots, J-1$:
\begin{equation}
\tilde{p}_{k, J-r-1} = \frac{p_{k+} - \sum_{l \ge J-r} \tilde{p}_{kl}}{\sum_{k^\prime \ge J-r} \left( p_{k^\prime +} - \sum_{l \ge J-r} \tilde{p}_{k^\prime l} \right)} \left\{ p_{+, J-r-1} -  \min (p_{J-r-1, +}, p_{+, J-r-1}) \right\}. \label{eq:imp}
\end{equation}

We need to show the constructed matrix
\begin{equation*}
\bm P_+  = \left(\begin{array}{cccc}
\tilde{p}_{00} & 0 & \ldots &  0 \\
\tilde{p}_{10} & \tilde{p}_{11} & \ldots &  0 \\
\vdots & \vdots & \ddots & \vdots \\
\tilde{p}_{J-1, 0} & \tilde{p}_{J-1, 1} & \ldots & \tilde{p}_{J-1, J-1} \\
\end{array} \right)
\end{equation*}
indeed satisfies  (i) $\tilde{p}_{kl} \ge 0$ for all $k,l = 0, \ldots, J-1$, (ii) the equality in condition (\ref{eq:thm1_tr}), for which a sufficient condition is $\tilde{p}_{kk} = \min \left( p_{k+}, p_{+k} \right)$ for all $k = 0, \ldots, J-1$, (iii) the vector of column sums is ${\bm p}_0$, i.e., $\sum_{k=0}^{J-1} \tilde{p}_{kl} = p_{+l}$ for all $l = 0, \ldots, J-1$, and (iv) the vector of row sums is ${\bm p}_1$, i.e., $\sum_{l=0}^{J-1} \tilde{p}_{kl} = p_{k+}$ for all $k = 0, \ldots, J-1$.

Among (i)--(iv) described above, (i)--(iii) follow directly by the construction of ${\bm P}_{+}$ described above. We need only to prove that $\sum_{l=0}^{J-1} \tilde{p}_{kl} = p_{k+}$ for all $k = 0, \ldots, J-1$. By Stochastic Dominance, we have $p_{0+} \le p_{+0}$, implying that $\tilde{p}_{00} = p_{0+}$. Therefore the sum of the first row of $\bm P_+$ is $p_{0+}$. Now for all $k=1, \ldots, J-1$, by substituting $r=J-1$ in (\ref{eq:imp}), or by filling up the first column given the last $J-1,$ we have
\begin{eqnarray*}
\tilde{p}_{k0} &=& \frac{p_{k+}- \sum_{l=1}^{J-1} \tilde{p}_{kl}}{\sum_{k=1}^{J-1} \left( p_{k+} - \sum_{l=1}^{J-1} \tilde{p}_{kl} \right)} \left(p_{+0} - p_{0+} \right)
\;\: = \;\: \frac{p_{k+} - \sum_{l=1}^{J-1} \tilde{p}_{kl}}{(1-p_{0+}) - (1-\tilde{p}_{00}) } \left(p_{+0} - p_{0+} \right) \\
&=& \frac{p_{k+} - \sum_{l=1}^{J-1} \tilde{p}_{kl}}{p_{+0} -p_{0+} } \left(p_{+0} - p_{0+} \right)
\;\: = \;\: p_{k+} - \sum_{l=1}^{J-1} \tilde{p}_{kl},
\end{eqnarray*}
which implies that $\sum_{l=0}^{J-1} \tilde{p}_{kl} = p_{k+}.$ The proof is complete.
\end{proof}

In the above proof of Theorem \ref{thm:m}, we suggest a way to constructing the maximizer $\bm P_+.$ Next we discuss the uniqueness of $\bm P_+$. By restricting $\bm P_+$ to be lower triangular and its $(j+1)$th diagonal element $p_{jj}$ to be $\min \left( p_{j+}, p_{+j} \right),$ what remain to be determined are the $(J-1)J/2$ off diagonal elements. Note that there are $(2J-3)$ constraints associated with them. The equality
$
(J-1)J / 2 = 2J-3
$
holds if and only if $J=2$ or $3$.

The case with $J=2$ corresponds to binary outcomes, which occur frequently in both methodology and applied research. For a recent discussion of finite population inference for binary data, see \cite{Ding:2015}. The following corollary provides the maximizer under $J=2$. Although it is a special case of Theorem \ref{thm:m}, we provide a direct proof to rigorously show the uniqueness of the maximizer.
\begin{corollary}\label{coro:m2}
For $J = 2$, given marginal probabilities $\bm{p}_1$ and $\bm{p}_0$ that satisfy Stochastic Dominance, the following matrix is the unique maximizer of $\kappa$:
\begin{equation}\label{eq:m2}
\bm P_+ =
\left(\begin{array}{cc}
p_{0+} &  0 \\
p_{1+} - p_{+1} & p_{+1} \\
\end{array} \right).
\end{equation}
\end{corollary}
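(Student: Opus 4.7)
The plan is to reduce the problem to a one-parameter optimization by exploiting the fact that in a $2\times 2$ table the margins almost pin down all four entries. Concretely, once $p_{00}$ is chosen, the remaining entries are forced: $p_{01} = p_{0+} - p_{00}$, $p_{10} = p_{+0} - p_{00}$, and $p_{11} = 1 - p_{0+} - p_{+0} + p_{00}$. The nonnegativity constraints $p_{kl}\ge 0$ then translate into the interval
\[
\max(0,\,p_{0+}+p_{+0}-1)\;\le\;p_{00}\;\le\;\min(p_{0+},\,p_{+0}).
\]

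Next I would observe that in the definition of Cohen's $\kappa$, the denominator $1 - \bm p_1^{\mathrm T}\bm p_0$ and the term $\bm p_1^{\mathrm T}\bm p_0$ in the numerator both depend only on the (fixed) marginals. Hence maximizing $\kappa$ reduces to maximizing $\mathrm{tr}(\bm P) = p_{00} + p_{11}$. Substituting the forced expression for $p_{11}$ yields $\mathrm{tr}(\bm P) = 2p_{00} + 1 - p_{0+} - p_{+0}$, which is a strictly increasing affine function of $p_{00}$.

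I would then invoke Stochastic Dominance, which for $J=2$ gives $p_{1+}\ge p_{+1}$, equivalently $p_{0+}\le p_{+0}$. This means $\min(p_{0+},p_{+0})=p_{0+}$, so the largest feasible value is $p_{00}=p_{0+}$. Plugging this back into the formulas for the other three entries gives $p_{01}=0$, $p_{10}=p_{+0}-p_{0+}=p_{1+}-p_{+1}$, and $p_{11}=1-p_{+0}=p_{+1}$, which matches exactly the matrix in \eqref{eq:m2}. Uniqueness is then immediate: since $\mathrm{tr}(\bm P)$ is a \emph{strictly} monotone affine function of the single free parameter $p_{00}$, any other feasible $\bm P$ has strictly smaller trace and hence strictly smaller $\kappa$.

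There is no serious obstacle here; the only point that needs care is checking that the upper bound $p_{00}=p_{0+}$ is actually attained within the feasible interval, i.e.\ that $p_{0+}\ge\max(0,p_{0+}+p_{+0}-1)$. The first inequality is trivial, and the second is equivalent to $p_{+0}\le 1$, which always holds. So the construction is always feasible under Stochastic Dominance, and the corollary follows.
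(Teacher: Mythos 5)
Your proof is correct and follows essentially the same route as the paper's: both reduce maximizing $\kappa$ to maximizing $\mathrm{tr}(\bm P)$ (since the denominator and $\bm p_1^{\mathrm T}\bm p_0$ depend only on the fixed marginals), use Stochastic Dominance to identify $p_{00}=p_{0+}$ and $p_{11}=p_{+1}$, and let the margin constraints force the remaining entries. Your explicit one-parameter reduction in $p_{00}$ makes the uniqueness claim slightly more self-contained than the paper's version, but the underlying argument is the same.
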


\begin{proof}[Proof of Corollary \ref{coro:m2}]
Because $\bm{p}_1$ and $\bm{p}_0$ satisfy Stochastic Dominance, we have $p_{0+} \le p_{+0}$ and $p_{1+} \ge p_{+1},$ implying that the diagonal elements of the maximizer are $p_{00} = p_{0+}$ and $p_{11} = p_{+1}$. Because the row sums of the maximizer are $\bm p_1$, we uniquely determine the entries of the maximizer, as shown in \eqref{eq:m2}. The maximizer has nonnegative entries because $p_{1+} \ge p_{+1}$, and its column sums are $\bm p_0$ because $p_{0+} + p_{1+} - p_{+1} = p_{+0}$. The proof is complete.
\end{proof}

The case with $J=3$ corresponds to three-level outcomes, which are also important in practice. For example, in a clinical trial we can describe the status of a patient as ``deterioration,'' ``no change'' or ``improvement'' \citep{Bajorski:1999}. The following corollary gives the maximizer for $J=3$. Again, we provide a direct proof.
\begin{corollary}\label{coro:m3}
For $J = 3$, given marginal probabilities $\bm{p}_1$ and $\bm{p}_0$ that satisfy Stochastic Dominance, the following matrix is the unique maximizer of $\kappa$:
\begin{equation}\label{eq:m3}
\bm{P}_+ =
\left(\begin{array}{ccc}
p_{0+} &  0  & 0 \\
p_{1+} - \min\left(p_{+1}, p_{1+}\right) & \min\left(p_{+1}, p_{1+}\right) & 0\\
p_{2+}-p_{+2}-\left\{p_{+1}-\min\left(p_{+1}, p_{1+}\right)\right\} & p_{+1}-\min\left(p_{+1}, p_{1+}\right) & p_{+2} \\
\end{array} \right).
\end{equation}
\end{corollary}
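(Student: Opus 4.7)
The plan is to follow the template of Corollary~\ref{coro:m2}: use Stochastic Dominance together with Theorem~\ref{thm:m} to pin down the diagonal of $\bm{P}_+$, then exploit the lower-triangular restriction and the marginal constraints to solve uniquely for the off-diagonal entries, and finally verify nonnegativity using Stochastic Dominance. The bookkeeping is slightly more delicate than the $J=2$ case because the middle diagonal entry $\min(p_{1+},p_{+1})$ is not simplified by Stochastic Dominance.

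First I would use Stochastic Dominance to conclude $p_{2+}\ge p_{+2}$ (taking $j=2$) and $p_{0+}\le p_{+0}$ (taking $j=1$, together with $\sum_k p_{k+}=\sum_l p_{+l}=1$). By Theorem~\ref{thm:m} the diagonal of the maximizer equals $\min(p_{k+},p_{+k})$, so these two inequalities force $\tilde p_{00}=p_{0+}$ and $\tilde p_{22}=p_{+2}$, while $\tilde p_{11}=\min(p_{1+},p_{+1})$ is left unsimplified.

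Next I would fill in the three remaining lower-triangular entries sequentially from the marginals. The row-1 sum forces $\tilde p_{10}=p_{1+}-\min(p_{1+},p_{+1})$; the column-1 sum forces $\tilde p_{21}=p_{+1}-\min(p_{1+},p_{+1})$; and the column-0 sum, combined with the identity $\sum_k p_{k+}=\sum_l p_{+l}=1$, rearranges to give exactly the expression for $\tilde p_{20}$ in \eqref{eq:m3}. Uniqueness is automatic since each entry is determined by a single linear equation once the previous ones are fixed, and the leftover row-2 sum is then consistent by the total-mass identity.

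The main obstacle I expect is verifying $\tilde p_{20}\ge 0$. The other two off-diagonal entries are trivially nonnegative because each has the form ``max minus min'' for the pair $(p_{1+},p_{+1})$. For $\tilde p_{20}$ I would split on which of $p_{1+},p_{+1}$ attains the minimum. If $p_{1+}\le p_{+1}$, the required inequality reduces to $p_{1+}+p_{2+}\ge p_{+1}+p_{+2}$, which is Stochastic Dominance at $j=1$. If $p_{1+}>p_{+1}$, it reduces instead to $p_{2+}\ge p_{+2}$, which is Stochastic Dominance at $j=2$. Thus both ranges of the minimum are handled by exactly the two nontrivial Stochastic Dominance inequalities, completing the plan.
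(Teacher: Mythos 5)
Your proposal is correct and follows essentially the same route as the paper's proof: fix the diagonal at $\min(p_{k+},p_{+k})$ using Stochastic Dominance and the trace bound from Theorem~\ref{thm:m}, solve the remaining lower-triangular entries one linear equation at a time, and verify $\tilde p_{20}\ge 0$ via Stochastic Dominance (your explicit case split on which of $p_{1+},p_{+1}$ attains the minimum is just an unpacked form of the paper's identity $p_{2+}-p_{+2}-\{p_{+1}-\min(p_{+1},p_{1+})\}=\min(p_{2+}-p_{+2},\,p_{+0}-p_{0+})$). The only cosmetic difference is that you determine $\tilde p_{20}$ from the column-$0$ sum and check the row-$2$ sum for consistency, whereas the paper does the reverse.
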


\begin{proof}[Proof of Corollary \ref{coro:m3}]
Because $\bm{p}_1$ and $\bm{p}_0$ satisfy Stochastic Dominance, we have
$
p_{0+} \le p_{+0}
$
and
$
p_{2+} \ge p_{+2},
$
which implies that the diagonal elements of the maximizer are $p_{00} = p_{0+}$, $p_{11} = \min\left(p_{+1}, p_{1+}\right)$, and $p_{22} = p_{+2}$. First, because the first row sum and third column sum are respectively $p_{0+}$ and $p_{+2}$, the maximizer is in the following form:
\begin{equation*}
\bm{P}_+ =
\left(\begin{array}{ccc}
p_{0+} &  0  & 0 \\
? & \min\left(p_{+1}, p_{1+}\right) & 0\\
? & ? & p_{+2} \\
\end{array} \right),
\end{equation*}
where ``?'' denotes an entry yet to be determined. Second, because the second row sum and column sum are respectively $p_{1+}$ and $p_{+1}$, the maximizer is in the following form:
\begin{equation*}
\bm{P}_+ =
\left(\begin{array}{ccc}
p_{0+} &  0  & 0 \\
p_{1+} - \min\left(p_{+1}, p_{1+}\right) & \min\left(p_{+1}, p_{1+}\right) & 0\\
? & p_{\cdot,
1}-\min\left(p_{+1}, p_{1+}\right) & p_{+2} \\
\end{array} \right).
\end{equation*}
Third, because the third row sum is $p_{2+}$, we uniquely determine the maximizer, as in \eqref{eq:m3}. Fourth, $\bm P_+$ has nonnegative entries, because
$$
p_{2+}-p_{+2}-\left\{p_{+ 1}-\min\left(p_{+ 1}, p_{1+}\right)\right\}
= \min\left(p_{2+}-p_{+ 2}, p_{+ 0}-p_{0+}\right)
\ge 0.
$$
Finally, $\bm P_+$ has row sums $\bm p_1$, and column sums $\bm p_0$, because
$$
p_{0+} + p_{1+} - \min\left(p_{+ 1}, p_{1+}\right) + p_{2+}-p_{+ 2}-\left\{p_{+ 1}-\min\left(p_{+ 1}, p_{1+}\right)\right\} = p_{+ 0}.
$$
The proof is complete.
\end{proof}

We end this section by constructing probability matrices with intermediate values of $\kappa$. Given the minimizer $\bm{P}_I$ and maximizer $\bm{P}_+,$ let
$
\bm{P}_\lambda = \lambda \bm P_I + (1-\lambda)\bm P_+.
$
We view $\lambda \in [0, 1]$ as a sensitivity parameter, because we cannot estimate it from the observed data. The resulting probability matrices have the same marginal probabilities as $\bm P_I$ and $\bm P_+,$ and subsequently the same Hellinger distances. However, they have different $\kappa$ depending on $\lambda$ because
$
\kappa \left( \bm P_\lambda \right) = (1-\lambda)\kappa \left( \bm P_+ \right).
$
For the infinite population framework, our constructed sequence of alternative hypotheses are thus $\{\bm P_\lambda\}_{\lambda\in[0, 1]}$.
For the finite population framework, because any entry of a well-defined probability matrix $\bm P$ is multiples of $1/N,$ we propose a calibration step by letting
\begin{equation*}
\tilde{p}_{kl}(\lambda) =
\begin{cases}
    \frac{\lfloor Np_{kl}(\lambda)\rfloor}{N} & \mbox{if } k \ne l, \\
    p_{+ l} - \sum_{k^\prime \ne l} \frac{\lfloor Np_{k^\prime l}(\lambda)\rfloor}{N} & \mbox{if } k = l,
\end{cases}
\end{equation*}
where $\lfloor\cdot\rfloor$ is the floor function. By definition, the column sums of $\widetilde{\bm P}_\lambda$ are $\bm p_0$. Let $\Lambda\left(\bm p_1, \bm p_0\right)$ denote the set containing all $\lambda$'s such that the row sums of $\widetilde{\bm P}_\lambda$ are $\bm p_1$, and our constructed sequence of alternative hypotheses are therefore $\{\widetilde{\bm P}_\lambda\}_{\lambda\in \Lambda\left(\bm p_1, \bm p_0\right)}$. In practice, we can use a grid search to obtain an approximation of $\Lambda\left(\bm p_1, \bm p_0\right)$.

\section{A Simulation Study}\label{sec:simu}

We demonstrate how the above construction facilitates assessment of powers of randomization tests. We use the squared Mann--Whitney $U$-statistic (\citealp{Agresti:2002}):
\begin{equation*}
U^2 =
\left[\sum\limits_{k=0}^{J-1}\sum\limits_{l=0}^{J-1}n_{1k}n_{0l}\left\{I(k>l)
- I(k<l)\right\}\right]^2.
\end{equation*}
Another commonly-used test statistic for categorical data is the $\chi^2$-statistic. However, we do not use it in this paper, because it does not utilize the order information and therefore is less powerful.

Although closed-form expressions of the powers of randomization test using the $U^2$ statistic are difficult to obtain, numerical calculations by Monte Carlo are straightforward, once we determine the alternative hypothesis $\bm P.$ We will follow three steps:
\begin{enumerate}
\item under $\bm P$ generate $2\times 10^5$ independent treatment assignments and obtain the corresponding observed data sets;
\item for each observed data set calculate the $p$-value of the randomization test using the observed value of $U^2$ and its simulated null distribution;
\item approximate the power of the $U^2$ statistic as the proportion of the $p$-values that are smaller than the significance level $\alpha = 0.05$.
\end{enumerate}

In this simulation study, we construct alternative hypotheses using the following four sets of marginal probabilities, two with $J=2$ and two with $J=3$:
\begin{enumerate}
    \item $\bm p_1 = (3/10, 7/10)^\mathrm{T}$, $\bm p_0 = (3/5, 2/5)^\mathrm{T}$, $\tau_{HD} = 0.216$;
    \item $\bm p_1 = (1/2, 1/2)^\mathrm{T}$, $\bm p_0 = (4/5, 1/5)^\mathrm{T}$, $\tau_{HD} = 0.227$;
    \item $\bm p_1 = (1/4, 1/4, 1/2)^\mathrm{T}$, $\bm p_0 = (2/5, 2/5, 1/5)^\mathrm{T}$, $\tau_{HD} = 0.227$;
    \item $\bm p_1 = (9/40, 9/40, 11/20)^\mathrm{T}$, $\bm p_0=(2/5, 2/5, 1/5)^\mathrm{T}$, $\tau_{HD} = 0.261$.
\end{enumerate}
For each case, we let the sample sizes $N=120, 160, 240$, and the sensitivity parameters $\lambda=0, 1/4, 1/2, 3/4, 1$. We then construct the probability matrices, which share the same marginals. For each probability matrix $\widetilde{\bm P}_\lambda,$ we use the aforementioned Monte Carlo procedure to calculate the powers. On one hand, different cases allow us to study the impact of $\tau_{HD}$ on the powers. On the other hand, within each case we can study the impact of $\kappa$ on the powers. The simulation results are summarized in Figure \ref{fg:power}, from which we draw the following conclusions. For all fixed sample sizes, the power functions of Case 2 dominate those of Case 1, and the power functions of Case 4 dominate those of Case 3. Therefore, for fixed $J$ the power increases as the Hellinger distance increases. Furthermore, for fixed marginals and sample size, the power increases as $\kappa$ decreases, or equivalently as $\lambda$ increases. However, this dependence becomes weaker as the sample size increases, because the power converges to one.

We can use the demonstrated methodology to compare the power functions of different test statistics, and also to determine sample sizes that guarantee a pre-specified power. For instance, in Case 3, we cannot guarantee a power of 0.95 with a sample of size 120, but we can with a sample of size 160.

\begin{figure}[htbp]
\centering
 \includegraphics[height=.9\linewidth, width=.9\linewidth]{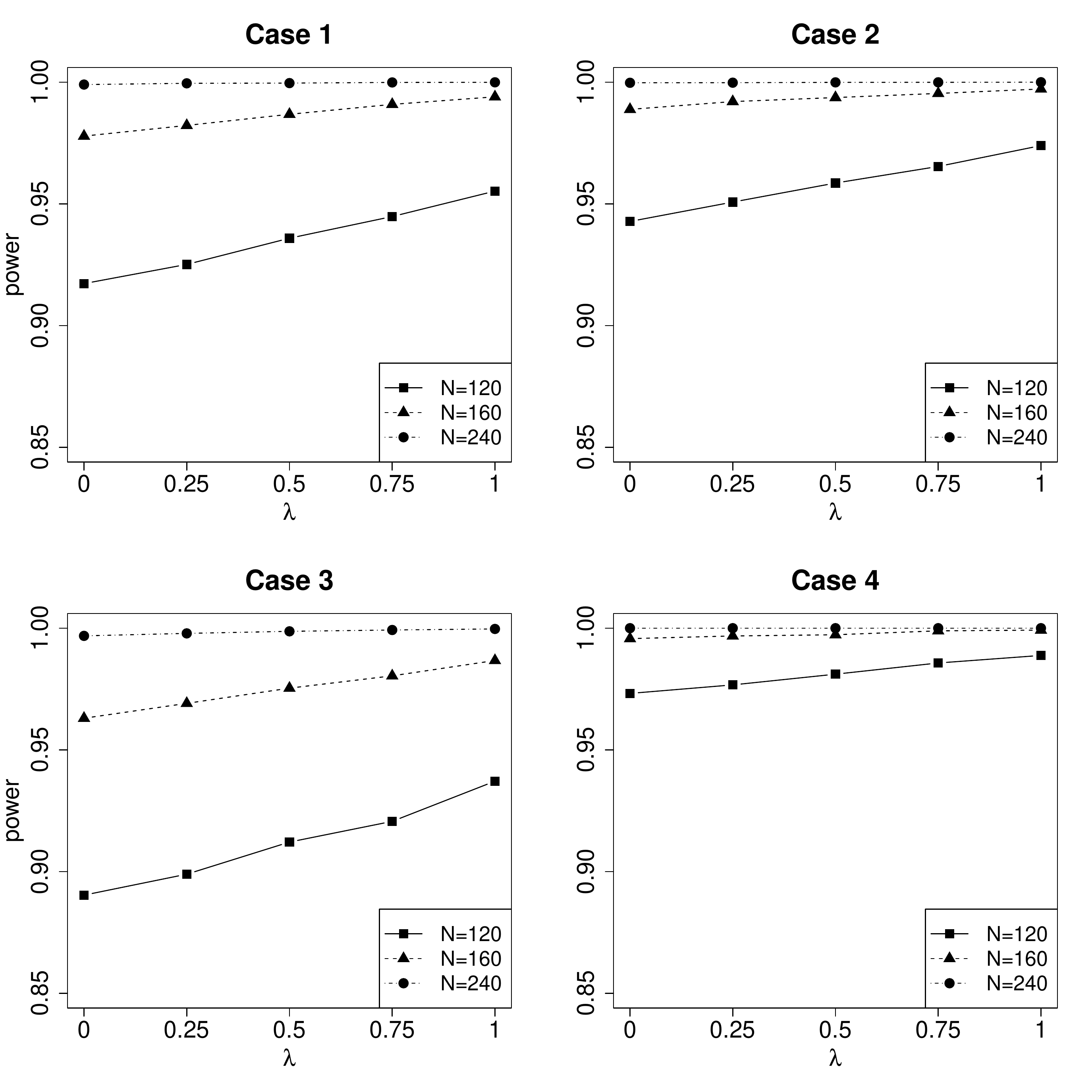}
\caption{Statistical Powers of Randomization Tests Using $U^2$.}
\label{fg:power}
\end{figure}

In summary, for a finite population, the power of the randomization test using $U^2$ depends on the marginal difference of the potential outcomes as well as the association between them. In particular, the power increases as the marginal difference increases, and given the marginals fixed, the power increases as the association decreases. Furthermore, the power converges to one as the sample size increases, for any case with nonzero marginal difference. The above conclusions appear to confirm our intuition, because it should be easier to reject the sharp null hypothesis given larger differences between the marginals, and given the marginals fixed, it should be easier to reject the sharp null hypothesis given smaller associations between the potential outcomes. Our findings conform to \cite{Plackett:1977}'s and \cite{Chernoff:2004}'s results about the classical $2\times 2$ tables: the marginals of the contingency tables contain limited amount of information about the association with finite samples, which becomes negligible asymptotically.

\section{Discussion}\label{sec:dis}

In this paper, we construct sequences of finite populations of ordinal outcomes in increasing departures from the sharp null hypothesis of no treatment effect. Our construction procedure is useful for evaluating of the powers of randomization tests. In particular, our construction procedure takes into account the dependence structure of the potential outcomes, whereas existing literature often assume independent potential outcomes. Through a simulation study, we demonstrate that the association between potential outcomes indeed affects the powers of randomization tests. We argue that taking into account the association between potential outcomes is crucial for conducting randomization tests in practice, for example when determining sample sizes that guarantee a pre-specified power.

There are multiple future directions based on our work. First, although we adopt a numerical approach, it is possible to derive the asymptotic distribution of the $U^2$ statistic under the sharp null hypothesis. Second, we can derive the maximizer of $\kappa$ for general marginal probabilities that do not satisfy Stochastic Dominance. Third, we can incorporate covariate information to further improve the powers of randomization tests. Fourth, while the Fisherian randomization-based inference is a useful first step, Neymanian and Bayesian counterparts of causal inference for ordinal outcomes are still needed. For some recent developments, see \cite{Lu:2015}.

\section*{Acknowledgements}

The authors thank a reviewer and the editor for their thoughtful comments that helped clarify several points better and substantially improved the presentation of the paper.

\bibliographystyle{apalike}
\bibliography{ordinal2015}

\begin{thebibliography}{}

\bibitem[Agresti, 2002]{Agresti:2002}
Agresti, A. (2002).
\newblock {\em Categorical Data Analysis, 2nd Edition}.
\newblock New York: John Wiley and Sons.

\bibitem[Agresti, 2010]{Agresti:2010}
Agresti, A. (2010).
\newblock {\em Analysis of Ordinal Categorical Data, 2nd ed.}
\newblock Hoboken, New Jersey: John Wiley and Sons.

\bibitem[Bajorski and Petkau, 1999]{Bajorski:1999}
Bajorski, P. and Petkau, J. (1999).
\newblock Nonparametric two-sample comparison of changes on ordinal responses.
\newblock {\em Journal of American Statistical Association}, 94:970--978.

\bibitem[Bradley, 1968]{Bradley:1968}
Bradley, J.~V. (1968).
\newblock {\em Distribution-{F}ree {S}tatistical {T}ests}.
\newblock Upper Saddle River, NJ: Prentice Hall.

\bibitem[Bradley et~al., 1962]{Bradley:1962}
Bradley, R.~A., Katti, S.~K., and Coons, I.~J. (1962).
\newblock Optimal scaling for ordered categories.
\newblock {\em Psychometrika}, 27:355--374.

\bibitem[Brillinger et~al., 1978]{Brillinger:1978}
Brillinger, D.~R., Jones, L.~V., and Tukey, J.~W. (1978).
\newblock Report of the statistical task force to the weather modification
  advisory board.
\newblock {\em The Management of Weather Resources, II: The Role of Statistics
  on Weather Resources Management}, Government Printing Office, Washington D.C.

\bibitem[Cheng, 2009]{Cheng:2009}
Cheng, J. (2009).
\newblock Estimation and inference for the causal effect of receiving treatment
  on a multinomial outcome.
\newblock {\em Biometrics}, 65:96--103.

\bibitem[Chernoff, 2004]{Chernoff:2004}
Chernoff, H. (2004).
\newblock Information, for testing the equality of two probabilities, from the
  margins of the 2$\times$ 2 table.
\newblock {\em Journal of Statistical Planning and Inference}, 121:209--214.

\bibitem[Cohen, 1960]{Cohen:1960}
Cohen, J. (1960).
\newblock A coefficient of agreement for nominal scales.
\newblock {\em Educational and Psychological Measurement}, 20:37---46.

\bibitem[Ding and Dasgupta, 2015]{Ding:2015}
Ding, P. and Dasgupta, T. (2015).
\newblock A potential tale of two by two tables from completely randomized
  experiments.
\newblock {\em Journal of American Statistical Association}, in press.

\bibitem[Fisher, 1935]{Fisher:1935}
Fisher, R.~A. (1935).
\newblock {\em The {D}esign of {E}xperiments, 1st Edition}.
\newblock Edinburgh, London: Oliver and Boyd.

\bibitem[Hellinger, 1909]{Hellinger:1909}
Hellinger, E. (1909).
\newblock Neue begrndung der theorie quadratischer formen von unendlichvielen
  vernderlichen.
\newblock {\em Journal f{\"u}r die reine und angewandte Mathematik},
  136:210--271.

\bibitem[Ju and Geng, 2010]{Ju:2010}
Ju, C. and Geng, Z. (2010).
\newblock Criteria for surrogate end points based on causal distributions.
\newblock {\em Journal of the Royal Statistical Society: Series B (Statistical
  Methodology)}, 72:129--142.

\bibitem[Kempthorne, 1952]{Kempthorne:1952}
Kempthorne, O. (1952).
\newblock {\em The {D}esign and {A}nalysis of {E}xperiments.}
\newblock New York: John Wiley and Sons.

\bibitem[Lehmann, 1975]{Lehmann:1975}
Lehmann, E.~L. (1975).
\newblock {\em Nonparametrics: Statistical Methods Based on Ranks, 1st
  Edition}.
\newblock San Francisco: Holden-Day.

\bibitem[Lindvall, 1992]{Lindvall:1992}
Lindvall, T. (1992).
\newblock {\em Lectures on the Coupling Method}.
\newblock New York: Wiley.

\bibitem[Lu et~al., 2015]{Lu:2015}
Lu, J., Ding, P., and Dasgupta, T. (2015).
\newblock Sharp bounds of causal effects on ordinal outcomes.
\newblock {\em arXiv: 1507.01542}.

\bibitem[Miller, 2006]{Miller:2006}
Miller, S. (2006).
\newblock {\em Experimental Design and Statistics.}
\newblock London: Routledge.

\bibitem[Neyman, 1923]{Neyman:1923}
Neyman, J. (1923).
\newblock On the application of probability theorey to agricultural
  experiments. {E}ssay on principals.
\newblock \hspace{0.2em} Section 9. [Translated in \emph{Statistical Science}
  (1990) $\bm {5,}$ 465--480].

\bibitem[Pitman, 1938]{Pitman:1938}
Pitman, E.~J. (1938).
\newblock Significance tests which may be applied to samples from any
  populations: {III.} {The} analysis of variance test.
\newblock {\em Biometrika}, 29:322--335.

\bibitem[Plackett, 1977]{Plackett:1977}
Plackett, R.~L. (1977).
\newblock The marginal totals of a $2\times2$ table.
\newblock {\em Biometrika}, 64:37--42.

\bibitem[Rosenbaum, 2001]{Rosenbaum:2001}
Rosenbaum, P.~R. (2001).
\newblock Effects attributable to treatment: inference in experiments and
  observational studies within a discrete pivot.
\newblock {\em Biometrika}, 88:219--231.

\bibitem[Rosenbaum, 2010]{Rosenbaum:2010}
Rosenbaum, P.~R. (2010).
\newblock {\em Design of Observational Studies.}
\newblock New York: Springer.

\bibitem[Rubin, 1974]{Rubin:1974}
Rubin, D.~B. (1974).
\newblock Estimating causal effects of treatments in randomized and
  nonrandomized studies.
\newblock {\em Journal of Educational Psychology}, 66:688--701.

\bibitem[Rubin, 1980]{Rubin:1980}
Rubin, D.~B. (1980).
\newblock Comment on ``{R}andomization analysis of experimental data: the
  {F}isher randomization test'' by {D}. {B}asu.
\newblock {\em Journal of American Statistical Association}, 75:591--593.

\bibitem[Strassen, 1965]{Strassen:1965}
Strassen, V. (1965).
\newblock The existence of probability measures with given marginals.
\newblock {\em Annals of Mathematical Statistics}, 36:423--439.

\end{thebibliography}

\end{document}